
\documentclass{article}
\usepackage{amsfonts}
\usepackage{bbm}
\usepackage{amsmath, amsthm, amssymb}
\usepackage{color,calc}
\usepackage{hyperref}

\newcommand{\sech}{\operatorname{sech}}

\newcommand{\dt}{\delta}
\newcommand{\ta}{\tau}

\newcommand{\et}{\eta}

\newcommand{\Om}{\Omega}
\newcommand{\ld}{\lambda}

\newtheorem{prop}{Proposition}
\newtheorem{lemm}{Lemma}
\newtheorem{thm}{Theorem}
\newtheorem{eg}{Example}


\begin{document}
%

\begin{center}{\Large \bf  A new $(\gamma_n,\sigma_k)-$ KP hierarchy and generalized dressing method }
\end{center}
\begin{center}
{\it Yuqin Yao$^{1)}$\footnote{Corresponding author:yyqinw@126.com
}, Yehui Huang$^{2)}$\footnote{huangyh@mails.tsinghua.edu.cn} and
Yunbo Zeng$^{3)}$\footnote{yzeng@math.tsinghua.edu.cn} }
\end{center}
\begin{center}{\small \it $^{1)}$Department of
  Applied Mathematics, China Agricultural University, Beijing, 100083, PR China\\
 $^{2)}$ School of Mathematics and Physics, North China Electric Power University, Beijing, 102206,
 China\\
$^{3)}$ Department of Mathematical Science, Tsinghua University,
Beijing, 100084 , PR China}
\end{center}

\vskip 12pt { \small\noindent\bf Abstract.}
 {A new $(\gamma_n,\sigma_k)-$ KP hierarchy with two new time series $\gamma_n$ and $\sigma_k$, which consists of $\gamma_n-$
 flow,
 $\sigma_k-$ flow and mixed $\gamma_n$ and $\sigma_k$ evolution equations of eigenfunctions, is proposed. Two reductions and constrained flows of
 $(\gamma_n,\sigma_k)-$ KP hierarchy are studied. The dressing method is generalized to the $(\gamma_n,\sigma_k)-$ KP hierarchy
 and some solutions are  presented.}

{ \small\noindent\bf Keywords:} $(\gamma_n,\sigma_k)-$ KP hierarchy;
constrained flows; Lax representation; generalized dressing method

\section{Introduction}
Generalizations of KP hierarchy (KPH) attract a lot of interests
from both physical and mathematical points of
view\cite{date}-\cite{melni3}. One kind of generalization is the
multi-component KP hierarchy\cite{date}, which contains many
physical relevant nonlinear integrable systems such as
Davey-Stewartson equation, two-dimensional Toda lattice and
three-wave resonant integrable equations. Another kind of
generalization of KP equation is the so called KP equation with
self-consistent sources (KPESCS)\cite{melni1,melni2}. For example,
the first type and second type of KPESCS consists of KP equation
with some additional terms and eigenvalue problem or time evolution
equations of eigenfunctions of KP equation,
respectively\cite{melni1}-\cite{Hu2}.

 Denote the time series of KP hierarchy  by $\{t_n\}$. Recently, we
 proposed an approach to construct an extended KP hierarchy(exKPH)
 by introducing another time series $\{\tau_k\}$ \cite{liu2008,lin2008,yao2009}. The
 exKPH consists of $t_n-$ flow of KP hierarchy, $\tau_k-$ flow and
 the $t_n-$ evolution equations of eigenfunctions. To make difference, we may
 call the exKPH as $(t_n,\tau_k)-$KPH. The $(t_n,\tau_k)-$KPH
 contains the first type and second type of KPESCS. Also we
 developed the dressing method to solve the
 $(t_n,\tau_k)-$KPH\cite{liu2009}. \cite{mawx} generalized the  $(t_n,\tau_k)-$KPH
 to the $(\tau_n,\tau_k)-$KPH which consists of $\tau_n-$ flow, $\tau_k-$ flow and
 the $\tau_n-$ evolution equations of eigenfunctions and  $\tau_k-$ evolutions of
 eigenfunctions.
 However, \cite{mawx} didn't find the solution of
 $(\tau_n,\tau_k)-$KPH. In contrast to one $t_n-$evolution equation
 of eigenfunctions as coupling equation in our $(t_n,\tau_k)-$KPH,
 there are two coupling equations: $\tau_n-$ evolution and $\tau_k-$
 evolution equations of eigenfunctions in $(\tau_n,\tau_k)-$KPH. Our
 generalized dressing method can not be applied to the
 $(\tau_n,\tau_k)-$KPH due to too many coupling equations.

 In this
 paper, we generalize the $(t_n,\tau_k)-$KPH to
 $(\gamma_n,\sigma_k)-$KPH by introducing two new time series
 $\gamma_n$ and $\sigma_k$ with two parameters $\alpha_n$ and
 $\beta_k$. The $(\gamma_n,\sigma_k)-$KPH consists of $\gamma_n-$
 flow, $\sigma_k-$flow and one mixed $\gamma_n$ and $\sigma_k$
 evolution equation of eigenfunctions. The $(\gamma_n,\sigma_k)-$KPH
 can be reduced to the KPH and $(t_n,\tau_k)-$KPH, and contains
 first type and second type as well as mixed type of KPESCS as
 special cases. The constrained flows of the
 $(\gamma_n,\sigma_k)-$KPH can be regarded as generalization of
 Gelfand-Dickey hierarchy (GDH), which contains the first type,
 second type as well as mixed type of GDH with self-consistent
 sources in special cases. We also develop the dressing method to
 solve the $(\gamma_n,\sigma_k)-$KPH. Comparing with the
 multi-component generalization, we generalize the KPH by means of
 introducing two new time series and adding eigenfunctions as
 components.

 The paper is organized as follows: In section 2, we propose a new $(\gamma_n,\sigma_k)-$
 KPH. Section 3 presents the constrained flows of the $(\gamma_n,\sigma_k)-$
 KPH. Section 4 devotes to develop the generalized
 dressing method  for solving the $(\gamma_n,\sigma_k)-$
 KPH. Section 5 presents the N-soliton solutions and a
 conclusion is given in the last section.

\section{A new $(\gamma_n,\sigma_k)-$ KP hierarchy}
\subsection{The  KP hierarchy and extended  KP hierarchy}
Let us first recall the construction of the  KP hierarchy(KPH)
\cite{date,jimbo1,Sato,Date1982,dickey} and the extended KP
hierarchy(exKPH)\cite{liu2008,liu2009}. It is well known that the
pseudo-differential operator L with potential functions $u_i$ is
defined as
 $$L = \partial  + u_1\partial^{-1}+u_2\partial^{-2}+\cdots.$$
The KPH is given by\cite{dickey}
\begin{equation}
  \label{eqn:dKP-LaxEqn}
  L_{t_n}=[B_n,L],
\end{equation}
where $B_n=L^n_{+}$ stands for the differential part of $L^{n}$. The
compatibility of the $t_n-$ flow and    $t_k-$ flow of
(\ref{eqn:dKP-LaxEqn}) leads to the zero-curvature representation of
KPH
\begin{equation}
  \label{addzeroc}
 B_{n,t_k}- B_{k,t_n}+[B_n,B_k]=0.
\end{equation}
In particular, $B_2=\partial^2+u_1$,
$B_3=\partial^3+3u_1\partial+3(u_{1x}+u_2)$ and (\ref{addzeroc}) by
setting $t_2=y$, $t_3=t$ and $u_1=u$ yields the KP equation
$$(4u_t-12uu_x-u_{xxx})_x
-3u_{yy}=0.$$

Based on the observation that  the squared
eigenfunction symmetry constraint given by
$$
   L^{k}=B_k+\sum_{i=1}^Nq_i\partial^{-1}r_i,~~~~$$$$
    q_{i,t_n}=B_n(q_i),~
    r_{i,t_n}=-B_n^*(r_i),
 $$
is compatible with  KP hierarchy\cite{constrant1,constrant2},  we
proposed the exKPH  as follows in  \cite{liu2008}
\begin{subequations}
  \label{eqns:exdKP-LaxEqn}
  \begin{align}
    L_{t_n}&=[B_n,L],  \label{eqn:exdKP-LaxEqna}\\
    L_{\ta_k}&=[B_k+\sum_{i=1}^Nq_i\partial^{-1}r_i,L], \label{eqn:exdKP-LaxEqnb}\\
    q_{i,t_n}&=B_n(q_i),~
    r_{i,t_n}=-B_n^*(r_i),~i=1,\cdots,N. \label{eqn:exdKP-LaxEqnc}
  \end{align}
\end{subequations}
 The commutativity of (\ref{eqn:exdKP-LaxEqna}) and
 (\ref{eqn:exdKP-LaxEqnb}) under (\ref{eqn:exdKP-LaxEqnc}) gives rise to
 the following zero-curvature representation for exKPH (\ref{eqns:exdKP-LaxEqn})
\begin{subequations}
  \label{exkpzeroc}
  \begin{align}
   & B_{n,\tau_k}- B_{k,t_n}+[B_n,B_{k}]+[B_n,\sum_{i=1}^Nq_i\partial^{-1}r_i]_{+}=0,  \label{exkpzeroca}\\
   & q_{i,t_n}=B_n(q_i),~
    r_{i,t_n}=-B_n^*(r_i),~i=1,\cdots,N. \label{exkpzerocb}
  \end{align}
\end{subequations}
To different (\ref{eqns:exdKP-LaxEqn}) from (\ref{eqn:dKP-LaxEqn})
and the generalized KPH presented in this paper, we may denote
(\ref{eqns:exdKP-LaxEqn}) or (\ref{exkpzeroc}) by $(t_n,\tau_k)-$
KPH. We developed the dressing method to solve the $(t_n,\tau_k)-$
KPH and obtained its solutions in \cite{liu2009}. \cite{mawx}
generalized the $(t_n,\tau_k)-$ KPH to $(\tau_n,\tau_k)-$ KPH as
follows
\begin{subequations}
  \label{ma}
  \begin{align}
   & B_{n,\tau_k}- B_{k,\tau_n}+[B_n,B_{k}]+[B_n,\sum_{i=1}^Nq_i\partial^{-1}r_i]_{+}
   +[\sum_{i=1}^Nq_i\partial^{-1}r_i,B_k]_{+}=0,  \label{ma1}\\
   & q_{i,\tau_n}=B_n(q_i),~
    r_{i,\tau_n}=-B_n^*(r_i),\label{ma2}\\
    &q_{i,\tau_k}=B_k(q_i),~
    r_{i,\tau_k}=-B_k^*(r_i),~i=1,\cdots,N. \label{ma3}
  \end{align}
\end{subequations}
But \cite{mawx} didn't  find the solutions for the
$(\tau_n,\tau_k)-$ KPH (\ref{ma}). In contrast to one pair of
coupling equations (\ref{eqn:exdKP-LaxEqnc}) (or (\ref{exkpzerocb}))
in $(t_n,\tau_k)-$ KPH, there are two pairs of coupling equations
(\ref{ma2}) and (\ref{ma3}) in $(\tau_n,\tau_k)-$ KPH. In fact, the
dressing method developed in our paper\cite{liu2009} can not be
applied to the $(\tau_n,\tau_k)-$ KPH (\ref{ma}) since there are too
many (two) coupling systems (\ref{ma2}) and (\ref{ma3}).

\subsection{A new $(\gamma_n,\sigma_k)-$ KP hierarchy}

Stimulated by the  $(t_n,\tau_k)-$ KPH (\ref{eqns:exdKP-LaxEqn}) and
(\ref{exkpzeroc}), we propose the following generalized KPH with two
generalized time series $\gamma_n$ and $\sigma_k$:
\begin{subequations}
  \label{eqns:gkp3}
  \begin{align}
 & L_{\gamma_n}=[B_n+\alpha_n\sum_{i=1}^Nq_i\partial^{-1}r_i,L],\label{gkp3a}\\
&L_{\sigma_k}=[B_k+\beta_k\sum_{i=1}^Nq_i\partial^{-1}r_i,L],\label{gkp3b}\\
 &  \alpha_n(q_{i,\sigma_k}-B_k(q_i))-\beta_k(q_{i,\gamma_n}-B_n(q_i))=0,\nonumber\\
  &  \alpha_n(r_{i,\sigma_k}+B^{*}_k(r_i))-\beta_k(r_{i,\gamma_n}+B^{*}_n(r_i))=0,~~i=1,2,\cdots,N.\label{gkp3c}
   \end{align}
\end{subequations}
We will prove the compatibility of (\ref{gkp3a}) and (\ref{gkp3b})
under (\ref{gkp3c}) in the following theorem. First we need the
following Lemma presented in \cite{liu2008}
\begin{equation}
 \label{eqns:gkp8}
[B_n,\sum_{i=1}^Nq_i\partial^{-1}r_i]_-=\sum_{i=1}^NB_n(q_i)\partial^{-1}r_i-\sum_{i=1}^Nq_i\partial^{-1}B^{*}_n(r_i).
\end{equation}
\begin{thm}
  \label{thm:0}
The  $\gamma_n-$ flow (\ref{gkp3a}) and $\sigma_k-$ flow
(\ref{gkp3b}) under (\ref{gkp3c}) are compatible.
\end{thm}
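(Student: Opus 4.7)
The plan is to verify the standard Zakharov--Shabat-type compatibility $(L_{\gamma_n})_{\sigma_k}=(L_{\sigma_k})_{\gamma_n}$. Writing $M:=\sum_{i=1}^{N} q_i\partial^{-1}r_i$ for brevity, an application of the Jacobi identity to the double bracket $[B_n+\alpha_n M,[B_k+\beta_k M,L]]-[B_k+\beta_k M,[B_n+\alpha_n M,L]]$ reduces the compatibility to the single zero-curvature equation
\begin{equation*}
B_{n,\sigma_k}+\alpha_n M_{\sigma_k}-B_{k,\gamma_n}-\beta_k M_{\gamma_n}+[B_n+\alpha_n M,\,B_k+\beta_k M]=0,
\end{equation*}
and the last bracket expands as $[B_n,B_k]+\beta_k[B_n,M]+\alpha_n[M,B_k]$ since $[M,M]=0$.

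Next I would compute $B_{n,\sigma_k}$ and $B_{k,\gamma_n}$ by differentiating $L^n$ and $L^k$ via (\ref{gkp3a})--(\ref{gkp3b}): from $(L^n)_{\sigma_k}=[B_k+\beta_k M,L^n]$ one reads off $B_{n,\sigma_k}=[B_k+\beta_k M,L^n]_{+}$, and symmetrically $B_{k,\gamma_n}=[B_n+\alpha_n M,L^k]_{+}$. The key simplification here is that $M$ is purely negative-order, so $[M,(L^n)_{-}]_{+}=0$ and hence $[M,L^n]_{+}=[M,B_n]_{+}$, and similarly for $k$. Using in addition the classical KP identity $[B_k,L^n]_{+}-[B_n,L^k]_{+}=-[B_n,B_k]$ (a consequence of $[L^n,L^k]=0$ after decomposing each factor as $B+(\cdot)_-$), the $[B_n,B_k]$ and the $\pm$-parts of $[B_n,M],[B_k,M]$ cancel, and the whole identity collapses to
\begin{equation*}
\beta_k[B_n,M]_{-}-\alpha_n[B_k,M]_{-}+\alpha_n M_{\sigma_k}-\beta_k M_{\gamma_n}=0.
\end{equation*}

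Finally, I would apply Lemma~(\ref{eqns:gkp8}) to write $[B_n,M]_{-}=\sum_i B_n(q_i)\partial^{-1}r_i-\sum_i q_i\partial^{-1}B_n^{*}(r_i)$ and similarly for $k$, and expand $M_{\sigma_k},M_{\gamma_n}$ by the Leibniz rule in the $q_i,r_i$. Collecting coefficients of $\partial^{-1}r_i$ on the left yields precisely $\alpha_n(q_{i,\sigma_k}-B_k(q_i))-\beta_k(q_{i,\gamma_n}-B_n(q_i))$, while the coefficients of $q_i\partial^{-1}$ on the right yield $-\bigl[\alpha_n(r_{i,\sigma_k}+B_k^{*}(r_i))-\beta_k(r_{i,\gamma_n}+B_n^{*}(r_i))\bigr]$; both vanish by the coupling relations (\ref{gkp3c}), completing the proof.

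The routine but delicate obstacle is the bookkeeping of $\pm$-decompositions when commuting the differential $B_n,B_k$ with the purely negative-order operator $M$ and with the negative tails of $L^n,L^k$; once one observes that the negative-times-negative parts contribute nothing to the positive projection and that the classical KP identity absorbs the $[B_n,B_k]$ term, the coupling system (\ref{gkp3c}) is exactly what the remaining $\partial^{-1}$-tails demand.
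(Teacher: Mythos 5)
Your proof is correct and follows essentially the same route as the paper's: both reduce compatibility to the operator zero-curvature identity $\tilde B_{n,\sigma_k}-\tilde B_{k,\gamma_n}+[\tilde B_n,\tilde B_k]=0$, compute the $\sigma_k$- and $\gamma_n$-derivatives of $B_n,B_k$ from the Lax flows, cancel the differential parts using $[L^n,L^k]=0$ together with the vanishing of the positive projection of a product of purely negative-order operators, and identify the residual $\partial^{-1}$-tail with the coupling system (\ref{gkp3c}) via Lemma (\ref{eqns:gkp8}). The only quibble is an immaterial sign in your final bookkeeping: the coefficient of $q_i\partial^{-1}$ comes out as $+\bigl[\alpha_n(r_{i,\sigma_k}+B_k^{*}(r_i))-\beta_k(r_{i,\gamma_n}+B_n^{*}(r_i))\bigr]$ rather than its negative, which vanishes by (\ref{gkp3c}) either way.
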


\begin{proof}
Denote
$$
  \tilde{B}_n=B_n+\alpha_n\sum_{i=1}^Nq_i\partial^{-1}r_i,$$$$
    \tilde{B}_k=B_k+\beta_k\sum_{i=1}^Nq_i\partial^{-1}r_i.$$
In order to prove $L_{\gamma_n,\sigma_k}=L_{\sigma_k,\gamma_n}$,
i.e.
$$[\tilde{B}_{n,\sigma_k}-\tilde{B}_{k,\gamma_n}+[\tilde{B}_{n},\tilde{B}_{k}],L]=0$$
we only need to prove
\begin{equation}
 \label{eqns:gkp5}
\tilde{B}_{n,\sigma_k}-\tilde{B}_{k,\gamma_n}+[\tilde{B}_{n},\tilde{B}_{k}]=0.
\end{equation}

  For convenience, we omit $\sum$. We can find that
$$\tilde{B}_{n,\sigma_k}=B_{n,\sigma_k}+\alpha_n(q\partial^{-1}r)_{\sigma_k}=[B_k+\beta_k(q\partial^{-1}r),L^{n}]_+
+\alpha_n(q\partial^{-1}r)_{\sigma_k}$$\begin{equation}
 \label{eqns:gkp6}
=[B_k,L^{n}]_{+}+\beta_k[q\partial^{-1}r,L^{n}]_{+}+\alpha_nq_{\sigma_k}\partial^{-1}r+\alpha_nq\partial^{-1}r_{\sigma_k},
\end{equation}
and similarly,
\begin{equation}
 \label{eqns:gkp7}
\tilde{B}_{k,\gamma_n}=[B_n,L^{k}]_{+}+\alpha_n[q\partial^{-1}r,L^{k}]_{+}+\beta_kq_{\gamma_n}\partial^{-1}r+\beta_kq\partial^{-1}r_{\gamma_n}.
\end{equation}
Making use of the basic Lemma (\ref{eqns:gkp8}), we have
$$[\tilde{B}_{n},\tilde{B}_{k}]=[B_{n},B_{k}]+[B_{n},\beta_kq\partial^{-1}r]+[\alpha_nq\partial^{-1}r,B_{k}]=[L^{n}-(L^{n})_{-},
L^{k}-(L^{k})_{-}]_{+}$$
$$~~~~+\beta_k[B_{n},q\partial^{-1}r]_{+}+\alpha_n[q\partial^{-1}r,B_k]_{+}+\beta_k[B_{n},q\partial^{-1}r]_{-}+\alpha_n[q\partial^{-1}r,B_k]_{-}$$
$$=[B_n,L^{k}]_{+}+[L^{n},B_k]_{+}-[(L^{n})_{-},(L^{k})_{-}]_++\beta_k[B_{n},q\partial^{-1}r]_{+}+\alpha_n[q\partial^{-1}r,B_k]_{+}$$
\begin{equation}
 \label{eqns:gkp9}
+\beta_kB_n(q)\partial^{-1}r-\beta_kq\partial^{-1}B^{*}_n(r)-\alpha_nB_k(q)\partial^{-1}r+\alpha_nq\partial^{-1}B^{*}_k(r).
\end{equation}
Then (\ref{eqns:gkp6}), (\ref{eqns:gkp7}) and (\ref{eqns:gkp9})
under (\ref{gkp3c}) yields
$$\tilde{B}_{n,\sigma_k}-\tilde{B}_{k,\gamma_n}+[\tilde{B}_{n},\tilde{B}_{k}]=[\alpha_n(q_{\sigma_k}-B_k(q))-\beta_k(q_{\gamma_n}-B_n(q))]
\partial^{-1}r$$ $$+q\partial^{-1}[\alpha_n(r_{\sigma_k}+B^{*}_k(r))-\beta_k(r_{\gamma_n}+B^{*}_n(r))]=0.$$
\end{proof}

Then the compatibility of $\gamma_n-$ flow (\ref{gkp3a}) and
$\sigma_k-$ flow (\ref{gkp3b}) under (\ref{gkp3c})  gives rise to
the zero-curvature representation for (\ref{eqns:gkp3})
$$(B_n+\alpha_n\sum_{i=1}^Nq_i\partial^{-1}r_i)_{\sigma_k}-(B_k+\beta_k\sum_{i=1}^Nq_i\partial^{-1}r_i)_{\gamma_n}
$$$$
 +[B_n+\alpha_n\sum_{i=1}^Nq_i\partial^{-1}r_i,B_k+\beta_k\sum_{i=1}^Nq_i\partial^{-1}r_i]=0$$
 which under (\ref{gkp3c}) can be simplified as follows. Then we have
\begin{thm}
  \label{thm:1}
  The commutativity of (\ref{gkp3a}) and (\ref{gkp3b}) under
  (\ref{gkp3c}) gives rise to the zero-curvature
  equation for the generalized KPH with two generalized time series
\begin{subequations}
  \label{eqns:gkp4}
  \begin{align}
 &
 B_{n,\sigma_k}-B_{k,\gamma_n} +[B_n,B_k]+\beta_k[B_n,\sum_{i=1}^Nq_i\partial^{-1}r_i]_+
 +\alpha_n[\sum_{i=1}^Nq_i\partial^{-1}r_i,B_k]_+=0,\label{gkp4a}\\
 & \alpha_n(q_{i,\sigma_k}-B_k(q_i))-\beta_k(q_{i,\gamma_n}-B_n(q_i))=0,\nonumber\\
  & \alpha_n(r_{i,\sigma_k}+B^{*}_k(r_i))-\beta_k(r_{i,\gamma_n}+B^{*}_n(r_i))=0,~~i=1,2,\cdots,N,\label{gkp4b}
   \end{align}
\end{subequations}
with the Lax representation
\begin{equation}
 \label{laxp}
\psi_{\gamma_n}=(B_n+\alpha_n\sum_{i=1}^Nq_i\partial^{-1}r_i)(\psi),~\psi_{\sigma_k}=(B_k+\beta_k\sum_{i=1}^Nq_i\partial^{-1}r_i)(\psi).
\end{equation}
\end{thm}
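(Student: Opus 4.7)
The plan is to piggy-back on Theorem~\ref{thm:0}, whose operator identity
$$\tilde{B}_{n,\sigma_k}-\tilde{B}_{k,\gamma_n}+[\tilde{B}_n,\tilde{B}_k]=0$$
has already been established under (\ref{gkp3c}). Theorem~\ref{thm:1} is essentially the same statement recast as a zero-curvature equation purely in terms of the differential operators $B_n,B_k$ plus the integral source terms. So the main task is a bookkeeping separation of that identity into its differential-operator part (which will become (\ref{gkp4a})) and its integral-operator part (which is handled automatically by (\ref{gkp3c}) and the basic commutator lemma (\ref{eqns:gkp8})), followed by a short derivation of the Lax representation (\ref{laxp}).

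Concretely, I would first expand
$$\tilde{B}_{n,\sigma_k}-\tilde{B}_{k,\gamma_n}+[\tilde{B}_n,\tilde{B}_k]$$
using $\tilde{B}_n=B_n+\alpha_n\sum q_i\p^{-1}r_i$ and $\tilde{B}_k=B_k+\beta_k\sum q_i\p^{-1}r_i$, noting that the bilinear term $[\sum q_i\p^{-1}r_i,\sum q_i\p^{-1}r_i]$ cancels. Then I would split the resulting expression as $(\cdot)_+ + (\cdot)_-$. Since $B_n$, $B_k$, $[B_n,B_k]$ are purely differential, and $\p^{-1}r_i$, $q_i\p^{-1}r_{i,\cdot}$, $q_{i,\cdot}\p^{-1}r_i$ are purely of non-positive order, reading off the $(\cdot)_+$ part yields exactly the left-hand side of (\ref{gkp4a}). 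Theorem~\ref{thm:0} forces this to vanish, which is (\ref{gkp4a}).

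For the $(\cdot)_-$ part I would verify directly (for consistency, and to confirm that the splitting is well-posed) that Lemma~(\ref{eqns:gkp8}) applied to $[B_n,\sum q_i\p^{-1}r_i]_-$ and $[\sum q_i\p^{-1}r_i,B_k]_-$, together with the coupling equations (\ref{gkp3c}), collapses the $\p^{-1}$-part to
$$\sum_i\bigl[\af_n(q_{i,\sg_k}-B_k(q_i))-\beta_k(q_{i,\gamma_n}-B_n(q_i))\bigr]\p^{-1}r_i+\sum_iq_i\p^{-1}\bigl[\af_n(r_{i,\sg_k}+B_k^{*}(r_i))-\beta_k(r_{i,\gamma_n}+B_n^{*}(r_i))\bigr]=0,$$
which is precisely the content of (\ref{gkp4b}). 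This is the same calculation that appears at the end of the proof of Theorem~\ref{thm:0}, simply reorganized.

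Finally, for the Lax representation, I would observe that the compatibility of $\psi_{\gamma_n}=\tilde{B}_n\psi$ and $\psi_{\sg_k}=\tilde{B}_k\psi$ is equivalent to
$$\bigl(\tilde{B}_{n,\sg_k}-\tilde{B}_{k,\gamma_n}+[\tilde{B}_n,\tilde{B}_k]\bigr)\psi=0,$$
which follows immediately from Theorem~\ref{thm:0} under (\ref{gkp3c}). There is no real obstacle here: the content is contained in Theorem~\ref{thm:0}, and the only subtlety is to be careful that when taking the $(\cdot)_+$ part the source-derivative terms $\alpha_n q_{i,\sg_k}\p^{-1}r_i$, $\beta_k q_{i,\gamma_n}\p^{-1}r_i$, etc.\ truly have no differential part and therefore do not pollute (\ref{gkp4a}).
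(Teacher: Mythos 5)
Your proposal is correct and follows essentially the same route as the paper: the paper likewise obtains (\ref{eqns:gkp4}) by writing down the full zero-curvature identity $\tilde{B}_{n,\sigma_k}-\tilde{B}_{k,\gamma_n}+[\tilde{B}_n,\tilde{B}_k]=0$ from Theorem~\ref{thm:0} and simplifying it under (\ref{gkp3c}), the projection onto the differential part giving (\ref{gkp4a}) and the $\partial^{-1}$-part collapsing via Lemma (\ref{eqns:gkp8}) to the expression that vanishes by (\ref{gkp4b}). Your write-up merely makes explicit the $(\cdot)_+/(\cdot)_-$ bookkeeping that the paper leaves implicit, so there is nothing to correct.
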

We briefly call (\ref{eqns:gkp3}) and (\ref{eqns:gkp4}) as
$(\gamma_n,\sigma_k)-$ KPH. It is easy to see that
$(\gamma_n,\sigma_k)-$ KPH (\ref{eqns:gkp3})  and (\ref{eqns:gkp4})
for $\alpha_n=\beta_k=0$ reduces to KPH (\ref{eqn:dKP-LaxEqn}) and
(\ref{addzeroc}), $(\gamma_n,\tau_k)-$ KPH  for
$\alpha_n=0,~\beta_k=1$ reduces to $(t_n,\tau_k)-$KPH
(\ref{eqns:exdKP-LaxEqn}) and  (\ref{exkpzeroc}). So
$(\gamma_n,\sigma_k)-$ KPH (\ref{eqns:gkp3}) and (\ref{eqns:gkp4})
present a more generalized KPH which contains the KPH and
$(t_n,\tau_k)-$KPH  as the special cases.


\begin{eg}
Let us take $n=2$ and $k=3$, and set $\gamma_2=y,~\sigma_3=t$,
$u_1=u$. Then equations (\ref{eqns:gkp4}) becomes
\begin{subequations}
  \label{exam}
  \begin{align}
 &
 B_{2,t}-B_{3,y}+[B_2,B_3]+\beta_3[B_2,\sum_{i=1}^Nq_i\partial^{-1}r_i]_+
 +\alpha_2[\sum_{i=1}^Nq_i\partial^{-1}r_i,B_3]_+=0,\label{exama}\\
 &  \alpha_2(q_{i,t}-B_3(q_i))-\beta_3(q_{i,y}-B_2(q_i))=0,\nonumber\\
  & \alpha_2(r_{i,t}+B^{*}_3(r_i))-\beta_3(r_{i,y}+B^{*}_2(r_i))=0,~~i=1,2,\cdots,N,\label{examb}
   \end{align}
\end{subequations}
which gives the following nonlinear equation
\begin{subequations}
  \label{examequ}
  {\small\begin{align}
 &
4u_t-3\partial^{-1}u_{yy}-12uu_x-u_{xxx}-3\alpha_2\sum_{i=1}^N(q_ir_i)_y+4\beta_3\sum_{i=1}^N(q_ir_i)_x \nonumber\\
 &+3\alpha_2\sum_{i=1}^N(q_ir_{i,xx}-q_{i,xx}r_i)=0,\label{examequa}\\
 &  \alpha_2(q_{i,t}-q_{i,xxx}-3uq_{i,x}-\frac{3}{2}q_i\partial^{-1}u_y-\frac{3}{2}q_iu_x-\frac{3}{2}q_i\sum_{j=1}^Nq_jr_j)\nonumber\\
 &-\beta_3(q_{i,y}-q_{i,xx}-2uq_{i})=0,\nonumber\\
  & \alpha_2(r_{i,t}-r_{i,xxx}-3ur_{i,x}+\frac{3}{2}r_i\partial^{-1}u_y-\frac{3}{2}r_iu_x+\frac{3}{2}r_i\sum_{j=1}^Nq_jr_j) \label{examequb}\\
 & -\beta_3(r_{i,y}+r_{i,xx}+2ur_{i})=0,~~i=1,2,\cdots,N,\nonumber
   \end{align}}
\end{subequations}
with the Lax representation as follows
$$\psi_{y}=(\partial^{2}+2u+\alpha_2\sum_{i=1}^Nq_i\partial^{-1}r_i)(\psi),$$
\begin{equation}
\label{exam1lax}
\psi_{t}=(\partial^{3}+3u\partial+\frac{3}{2}\partial^{-1}u_y+\frac{3}{2}u_x+\frac{3}{2}\beta_3
\sum_{i=1}^Nq_i\partial^{-1}r_i)(\psi).\end{equation}
\end{eg}
Specially, when take $\alpha_2=\beta_3=0$; $\alpha_2=0,~\beta_3=1$;
$\alpha_2=1,~\beta_3=0$ and  $\alpha_2=1,~\beta_3=1$, respectively,
(\ref{examequ}) and  (\ref{exam1lax}) reduces to  the KP
equation\cite{dickey}, the first type of KP equation with
self-consistent sources\cite{melni1,melni2,zeng04}, the second type
of KP equation with self-consistent sources\cite{melni1,Hu2,liu2008}
and the mixed type of KP equation with self-consistent
sources\cite{Hu2} and their Lax representations, respectively.

\section{Reduction}

Consider the constraint given by
\begin{equation}
 \label{kconstraint1}L^{k}=B_k+\beta_k\sum_{i=1}^Nq_i\partial^{-1}r_i.
 \end{equation}
Then (\ref{gkp3b}) yields

\begin{equation}
 \label{addr1}(L^{k})_{\sigma_k}=[B_k+\beta_k\sum_{i=1}^Nq_i\partial^{-1}r_i,L^{k}]=0,
 \end{equation}
$$B_{k,\sigma_k}=(L^{k}_{\sigma_k})_+=0,$$
$$(\sum_{i=1}^Nq_i\partial^{-1}r_i)_{\sigma_k}=(L^{k}_{\sigma_k})_-=0,$$
which imply that $L$ , $B_k$,  $q_i$ and $r_i$  under
(\ref{kconstraint1}) are independent of $\sigma_k$. Subsequently,
$q_{i,\sigma_k}$ and $r_{i,\sigma_k}$ in (\ref{gkp3c}) should be
replaced by $\lambda_iq_i$ and $-\lambda_ir_i$ as in the case of
constrained flow of KP\cite{constrant1,constrant2}, namely
(\ref{gkp3c}) under the constraint (\ref{kconstraint1}) should be
replaced by

$$\alpha_n(\lambda_iq_{i}-B_k(q_i))-\beta_k(q_{i,\gamma_n}-B_n(q_i))=0,$$
\begin{equation}
\label{addr2}
\alpha_n(-\lambda_ir_{i}+B^{*}_k(r_i))-\beta_k(r_{i,\gamma_n}+B^{*}_n(r_i))=0.
\end{equation}
We will show that the constraint (\ref{kconstraint1}) is invariant
under the $\gamma_n-$ flow (\ref{gkp3a}) and (\ref{addr2}). In fact,
 making use
of  (\ref{gkp3a}),(\ref{eqns:gkp8}) and (\ref{addr2}), we have
$$(L^{k}-B_{k})_{\gamma_n}=(L^{k}_{\gamma_n})_-=[\tilde{B}_n,L^{k}]_-$$
$$(\beta_k\sum_{i=1}^Nq_i\partial^{-1}r_i)_{\gamma_n}=\beta_k\sum_{i=1}^N(q_{i,\gamma_n}\partial^{-1}r_i+q_i\partial^{-1}r_{i,\gamma_n})
=\sum_{i=1}^N[\beta_kB_n(q_i)\partial^{-1}r_i$$
$$+\alpha_n(\lambda_iq_{i}-B_k(q_i))\partial^{-1}r_i-\beta_kq_i\partial^{-1}B_n^{*}(r_i)+\alpha_nq_i\partial^{-1}(-\lambda_ir_{i}+B^{*}_k(r_i))]$$
$$=[B_n,\beta_k\sum_{i=1}^Nq_i\partial^{-1}r_i]_{-}-[B_k,\alpha_n\sum_{i=1}^Nq_i\partial^{-1}r_i]_-~~~~~~~~~~~~~~~~~~~~~~~~~~~~~~~
$$$$=[\tilde{B}_n,\beta_k\sum_{i=1}^Nq_i\partial^{-1}r_i]_--[B_k,\alpha_n\sum_{i=1}^Nq_i\partial^{-1}r_i]_-~~~~~~~~~~~~~~~~~~~~~~~~~~~~~~~
$$$$=[\tilde{B}_n,L^{k}]_--[\tilde{B}_n,B_k]_--[B_k,\tilde{B}_n]_-+[B_k,B_n]_-=[\tilde{B}_n,L^{k}]_-.~~~~~~~~~~~~~$$
Then
$$(L^{k}-B_{k}-\beta_k\sum_{i=1}^Nq_i\partial^{-1}r_i)_{\gamma_n}=0.$$
This means that the sub-manifold determined by the k-constraint
(\ref{kconstraint1}) is invariant under the $\gamma_n-$flow
(\ref{gkp3a}) and (\ref{addr2}).

Therefore, the constrained flow of  $(\gamma_n,\sigma_k)$-KPH
(\ref{eqns:gkp3}) and (\ref{eqns:gkp4}) under (\ref{kconstraint1})
reads
\begin{subequations}
  \label{constrainth}
  \begin{align}
 &
 B_{k,\gamma_n}
 +[B_k,B_n]+\beta_k[\sum_{i=1}^Nq_i\partial^{-1}r_i,B_n]_++\alpha_n[B_k,\sum_{i=1}^Nq_i\partial^{-1}r_i]_+=0,\label{constraintha}\\
 &  \alpha_n(\lambda_iq_{i}-B_k(q_i))-\beta_k(q_{i,\gamma_n}-B_n(q_i))=0,\nonumber\\
  &
  \alpha_n(-\lambda_ir_{i}+B^{*}_k(r_i))-\beta_k(r_{i,\gamma_n}+B^{*}_n(r_i))=0,~~i=1,2,\cdots,N.\label{constrainthc}\\
&with\nonumber\\
&B_n=(B_k+\beta_k\sum_{i=1}^Nq_i\partial^{-1}r_i)^{\frac{n}{k}}_{+}.
\end{align}
\end{subequations}

The system (\ref{constrainth}) can be regarded as the generalized
Gelfand-Dickey hierarchy (GDH). When $\alpha_n=\beta_k=0$,
(\ref{constrainth}) reduces to the GDH. When
$\alpha_n=1,~\beta_k=0$, (\ref{constrainth}) is just the first type
of GDH with self-consistent sources. When $\alpha_n=0,~\beta_k=1$,
(\ref{constrainth}) represent the second type of GDH with
self-consistent sources.

\begin{eg}
When $k=2,~n=3,~\gamma_3=t,~u_1=u$, (\ref{constrainth}) gives
\begin{subequations}
  \label{mixkdv}
  \begin{align}
 &
 u_t-\frac{1}{4}u_{xxx}-3uu_x+\alpha_3\sum_{i=1}^N(q_ir_i)_x+\frac{3}{4}\beta_2\sum_{i=1}^N(q_ir_{i,xx}-q_{i,xx}r_i)=0,\label{mixkdva}\\
 &  -\beta_2(q_{i,t}-q_{i,xxx}-3uq_{i,x}-\frac{3}{2}u_xq_i-\frac{3}{2}q_i\sum_{j=1}^Nq_jr_j)+\alpha_3(\lambda_{i}q_i-q_{i,xx}-2uq_i)=0,\label{mixkdvb}\\
  &  \beta_2(r_{i,t}-r_{i,xxx}-3ur_{i,x}-\frac{3}{2}u_xr_i+\frac{3}{2}r_i\sum_{j=1}^Nq_jr_j)-\alpha_3(-\lambda_{i}r_i+r_{i,xx}+2ur_i)=0,
  \nonumber\\
  &~~i=1,2,\cdots,N.\label{mixkdvc}
   \end{align}
\end{subequations}
which just is the mixed type of KdV equation with self-consistent
sources. (\ref{mixkdv}) with $\alpha_3=1,~\beta_2=0$ gives the first
type of KdV equation with sources\cite{kdv1,kdv2}. (\ref{mixkdv})
with $\alpha_3=0,~\beta_2=1$ gives the second  type of KdV equation
with sources\cite{liu2008}.
\end{eg}
\begin{eg}
When $k=3,~n=2$ and  $\gamma_2=t$, $u_1=u$, (\ref{constrainth})
gives rise to the mixed type of Boussinesq equation with
self-consistent sources
\begin{subequations}
  \label{mixbouss}
\small{  \begin{align}
 &
\frac{1}{3}u_{xxxx}+2(u^{2})_{xx}+u_{tt}+\sum_{i=1}^N[-\frac{4}{3}\beta_3(q_ir_i)_{xx}+\alpha_2(q_ir_i)_{xt}+\alpha_2
(q_{i,xx}r_i-q_ir_{i,xx})_x]=0,\label{mixkdva}\\
 &
 \alpha_2[\lambda_{i}q_i-q_{i,xxx}-3uq_{i,x}-q_i(\frac{3}{2}\partial^{-1}u_y+\frac{3}{2}u_x+\frac{3}{2}\sum_{i=1}^Nq_jr_j)]\nonumber\\
& -\beta_3(q_{i,t}-q_{i,xx}-2uq_i)=0,\nonumber\\
  &
  \alpha_2[-\lambda_{i}r_i-r_{i,xxx}-3ur_{i,x}+r_i(\frac{3}{2}\partial^{-1}u_y-\frac{3}{2}u_x+\frac{3}{2}\sum_{i=1}^Nq_jr_j)]\nonumber\\
 & - \beta_3(r_{i,t}-r_{i,xx}-2ur_i)=0,\nonumber\\
  &~~i=1,2,\cdots,N.\label{mixkdvc}
   \end{align}}
\end{subequations}
(\ref{mixbouss}) with $\alpha_2=1,~\beta_3=0$ gives the first type
of Boussinesq equation with sources. (\ref{mixkdv}) with
$\alpha_2=0,~\beta_3=1$ gives the second  type of Boussinesq
equation with sources\cite{liu2008}.

\end{eg}

\section{Dressing approach for $(\gamma_n,\sigma_k)$-KPH}
Inspired by Refs\cite{dickey,os96}, we consider the generalized
dressing approach for $(\gamma_n,\sigma_k)$-KPH. Assume that
operator $L$ of $(\gamma_n,\sigma_k)$-KPH  can be written as a
dressing form
\begin{equation}
  \label{eqn:dress}
  L=W\partial W^{-1},
\end{equation}
\begin{equation}
  \label{w}
  W =1 + w_1\partial^{-1} + w_2\partial^{-2}+\cdots .
\end{equation}

\begin{prop}
\label{eqn:add} If $W$ defined by (\ref{w}) satisfies
\begin{subequations}
\label{prop1}
\begin{align}
 W_{\gamma_n}=-L^{n}_{-}W+\alpha_n\sum_{i=1}^Nq_i\partial^{-1}r_iW,\label{prop1a}\\
W_{\sigma_k}=-L^{k}_{-}W+\beta_k\sum_{i=1}^Nq_i\partial^{-1}r_iW\label{prop1b}
\end{align}
\end{subequations}
 then $L$ satisfies
(\ref{gkp3a}) and (\ref{gkp3b}).
\end{prop}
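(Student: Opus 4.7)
The plan is to exploit the dressing identity $L=W\partial W^{-1}$ in the standard way used in Sato theory: differentiate it in $\gamma_n$ (resp.\ $\sigma_k$), substitute the assumed evolution equations for $W$, and show that the result collapses to the commutator $[\tilde{B}_n,L]$ (resp.\ $[\tilde{B}_k,L]$), where $\tilde{B}_n=B_n+\alpha_n\sum_{i=1}^Nq_i\partial^{-1}r_i$ and $\tilde{B}_k=B_k+\beta_k\sum_{i=1}^Nq_i\partial^{-1}r_i$.

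First I would differentiate $WW^{-1}=1$ to obtain $(W^{-1})_{\gamma_n}=-W^{-1}W_{\gamma_n}W^{-1}$, and then differentiate $L=W\partial W^{-1}$ to get
\begin{equation*}
L_{\gamma_n}=W_{\gamma_n}\partial W^{-1}-W\partial W^{-1}W_{\gamma_n}W^{-1}=\bigl[W_{\gamma_n}W^{-1},\,L\bigr].
\end{equation*}
The same identity holds for $\sigma_k$. Thus the whole problem reduces to identifying $W_{\gamma_n}W^{-1}$ and $W_{\sigma_k}W^{-1}$ modulo operators that commute with $L$.

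Next I would substitute (\ref{prop1a}) into the expression for $W_{\gamma_n}W^{-1}$:
\begin{equation*}
W_{\gamma_n}W^{-1}=-L^n_-+\alpha_n\sum_{i=1}^Nq_i\partial^{-1}r_i.
\end{equation*}
Using the decomposition $L^n=L^n_++L^n_-=B_n+L^n_-$, this rewrites as $W_{\gamma_n}W^{-1}=\tilde{B}_n-L^n$. Since $[L^n,L]=0$, we get $L_{\gamma_n}=[\tilde{B}_n-L^n,L]=[\tilde{B}_n,L]$, which is exactly (\ref{gkp3a}). An entirely analogous calculation using (\ref{prop1b}) gives $W_{\sigma_k}W^{-1}=\tilde{B}_k-L^k$ and hence $L_{\sigma_k}=[\tilde{B}_k,L]$, i.e. (\ref{gkp3b}).

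I do not anticipate a serious obstacle: the argument is a direct algebraic manipulation and does not require the compatibility condition (\ref{gkp3c}) at all (that condition enters only when one wants the two flows of $W$ to coexist consistently, which is the content of Theorem~\ref{thm:0} and is not part of the present proposition). The only point that demands a touch of care is the routine verification that the pseudo-differential products $W\partial W^{-1}$ and the derivative $(W^{-1})_{\gamma_n}$ are handled correctly, which is the first step above.
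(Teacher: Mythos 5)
Your proposal is correct and follows essentially the same route as the paper: differentiate the dressing relation $L=W\partial W^{-1}$, substitute the assumed evolution of $W$, and use $[L^n,L]=0$ to replace $-L^n_-$ by $B_n$ inside the commutator. Your packaging of the computation as $L_{\gamma_n}=[W_{\gamma_n}W^{-1},L]$ is a slightly tidier presentation of the identical argument, and your remark that (\ref{gkp3c}) is not needed here is accurate.
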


\begin{proof}
Based on (\ref{eqn:dress}) and (\ref{prop1a}), we have
$$L_{\gamma_n}=W_{\gamma_n}\partial W-W\partial W^{-1}W_{\gamma_n}W^{-1}$$
$$=(L^{n}_{+}+\alpha_n\sum_{i=1}^Nq_i\partial^{-1}r_i)L-L(L^{n}_{+}+\alpha_n\sum_{i=1}^Nq_i\partial^{-1}r_i)$$
$$=[B_n+\alpha_n\sum_{i=1}^Nq_i\partial^{-1}r_i,L].$$
Similarly, we can prove that $L$ satisfies (\ref{gkp3b}).
\end{proof}
It is well known that the Wronskian determinant\cite{dickey}
$$Wr(h_1,\cdots,h_N)=\left|
    \begin{matrix}
      h_1 & h_2 & \cdots & h_N\\
      h'_1 &  h'_2 & \cdots &  h'_N\\
      \vdots & \vdots & \vdots & \vdots\\
      h^{N-1}_1  & h^{N-1}_2 & \cdots &h^{N-1}_N
    \end{matrix}\right|$$
is a $\tau-$ function of the KPH and the Nth order differential
operator given by
\begin{equation}
  \label{woperator}
W= \frac{1}{Wr(h_1,\cdots,h_N)}\left|
    \begin{matrix}
      h_1 & h_2 & \cdots & h_N &1\\
      h'_1 &  h'_2 & \cdots &  h'_N&\partial\\
      \vdots & \vdots & \vdots & \vdots\\
      h^{N}_1  & h^{N}_2 & \cdots &h^{N}_N&\partial^{N}
    \end{matrix}\right|
\end{equation}
provides the dressing operator,
 where $h_1,~h_2,~\cdots, h_N$ are $N$ independent
functions and satisfy $W(h_i)=0.$

 This dressing operator $W$
is constructed as follows: Let $f_i$, $g_i$ satisfy
\begin{subequations}
  \label{eq:fg}
  \begin{align}
    &f_{i,\gamma_n}=\partial^{n}(f_i),\quad f_{i,\sigma_k}=\partial^{k}(f_i)\quad \\
    &g_{i,\gamma_n}=\partial^n(g_i),\quad
    g_{i,\sigma_k}=\partial^k(g_i),~i=1,\ldots,N,
  \end{align}
\end{subequations}
and let $h_i$ be the linear combination of $f_i$ and $g_i$
\begin{equation}
  \label{eq:h'}
  h_i=f_i+F_i(\alpha_n\gamma_n+\beta_k\sigma_k)g_i\quad i=1,\ldots,N,
\end{equation}
with $F_i(X)$ being a differentiable function of $X$,
$X=\alpha_n\gamma_n+\beta_k\sigma_k$.

Define
\begin{equation}
  \label{eqn:EigenFns}
 \small{ q_i=-\dot{F}_iW(g_i),~
  r_i=(-1)^{N-i}\frac{Wr(h_1,\cdots,\hat{h}_i,\cdots,h_N)}
  {Wr(h_1,\cdots,h_N)}, i=1,\ldots, N}
\end{equation}
where the hat $\hat{\;}$ means rule out this term from the Wronskian
determinant, $\dot{F}_i=\frac{d\alpha_i}{dX}$.
We have
\begin{thm}
  \label{thm}
  Let $W$ be defined by (\ref{woperator}) and (\ref{eq:h'}), $L=W\partial W^{-1}$, $q_{i}$ and $r_{i}$
  be given by (\ref{eqn:EigenFns}), then $W$, $L$,
  $q_i$, $r_i$ satisfy (\ref{prop1}) and  $(\gamma_n,\sigma_k)$-KPH (\ref{eqns:gkp3}) and (\ref{eqns:gkp4}).
\end{thm}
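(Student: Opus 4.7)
The plan is to first establish the Sato-type relations (\ref{prop1}) for $W$; once these are in hand, Proposition \ref{eqn:add} immediately yields the Lax equations (\ref{gkp3a})--(\ref{gkp3b}), and direct differentiation of the definitions of $q_i,r_i$ will deliver the coupling system (\ref{gkp3c}). Thus the heart of the argument is the verification of the dressing equations (\ref{prop1a}) and (\ref{prop1b}).

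The key input is the identity $W(h_i)=0$, which follows from the structure of the determinant in (\ref{woperator}): substituting $h_i$ into the last column produces two equal columns. Differentiating $W(h_i)=0$ in $\gamma_n$ and using (\ref{eq:fg}) together with the observation that $F_i=F_i(X)$ depends on $\gamma_n$ only through $X=\alpha_n\gamma_n+\beta_k\sigma_k$, I can write $h_{i,\gamma_n}=\partial^n h_i+\alpha_n\dot F_i g_i$. The operator identity $W\partial^n=L^n W$ (immediate from $L=W\partial W^{-1}$) together with $W(h_i)=0$ then reduces $W_{\gamma_n}(h_i)$ to $\alpha_n q_i$ via the definition $q_i=-\dot F_i W(g_i)$. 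A parallel evaluation of the right-hand side of (\ref{prop1a}) at $h_i$, which uses the cofactor identities relating $W(g_j)$ to the Wronskian ratios defining $r_i$, produces the same value. To promote this pointwise matching to the operator equation I would invoke the standard dressing-uniqueness argument: the difference of the two sides is a pseudo-differential operator whose form (\ref{w}) and action on the $N$-dimensional space $\{h_1,\ldots,h_N\}$ force it to be zero. The derivation of (\ref{prop1b}) is identical after the substitutions $\gamma_n\to\sigma_k$, $\alpha_n\to\beta_k$, $n\to k$, so at this stage Proposition \ref{eqn:add} produces (\ref{gkp3a}) and (\ref{gkp3b}).

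For (\ref{gkp3c}) I would differentiate $q_i=-\dot F_i W(g_i)$ with respect to $\gamma_n$ and $\sigma_k$, using the chain rules $F_{i,\gamma_n}=\alpha_n\dot F_i$, $F_{i,\sigma_k}=\beta_k\dot F_i$, the evolution (\ref{eq:fg}) of $g_i$, and the Sato equations just proved. The expressions for $q_{i,\gamma_n}-B_n(q_i)$ and $q_{i,\sigma_k}-B_k(q_i)$ should then be manifestly $\alpha_n$- and $\beta_k$-proportional to a common quantity, making the combination in (\ref{gkp3c}) vanish identically. The analogue for $r_i$ is the delicate part, since $r_i$ is a ratio of Wronskians and matching its derivatives against $B_n^{*}(r_i)$ and $B_k^{*}(r_i)$ requires Jacobi-type cofactor identities. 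This $r_i$ matching, together with the precise identification of $\sum_j q_j\partial^{-1} r_j$ with the corresponding pseudo-differential operator arising from $W_{\gamma_n}W^{-1}+L^n_-$ (resp.\ $W_{\sigma_k}W^{-1}+L^k_-$), is the step I expect to be the main technical obstacle; the remainder is careful bookkeeping within the Sato calculus.
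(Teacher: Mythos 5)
Your treatment of the dressing equations (\ref{prop1}) is essentially the paper's: differentiate $W(h_i)=0$, use $h_{i,\gamma_n}=\partial^n(h_i)+\alpha_n\dot F_ig_i$, the intertwining $W\partial^n=L^nW$, and the normalization $(\partial^{-1}r_jW)(h_i)=\delta_{ji}$ (Lemma \ref{lm:key}), then kill the resulting operator because it annihilates $N$ independent functions. One caution there: the conclusion requires that $W_{\gamma_n}+L^n_-W-\alpha_n\sum_jq_j\partial^{-1}r_jW$ be a genuine \emph{differential} operator of order $<N$ (a pseudo-differential operator can annihilate $N$ functions without vanishing); this is exactly what the first assertion of Lemma \ref{lm:key} supplies, so you should invoke it rather than the ``form (\ref{w})'' of $W$, which is not the relevant fact here.

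The genuine gap is in the second half of (\ref{gkp3c}), the equation for the $r_i$. You correctly identify it as the main obstacle but propose to attack it by differentiating the Wronskian ratio in (\ref{eqn:EigenFns}) and matching against $B_n^{*}(r_i)$, $B_k^{*}(r_i)$ via Jacobi-type cofactor identities; you do not carry this out, and it is not clear it closes, since the individual derivatives $r_{i,\gamma_n}+B_n^{*}(r_i)$ do not have a usable closed form one at a time. The paper avoids Wronskian manipulations entirely: by the Oevel--Strampp formula $W^{-1}=\sum_ih_i\partial^{-1}r_i$ (Lemma \ref{lm:OS}), one computes $(W^{-1})_{\gamma_n}$ and $(W^{-1})_{\sigma_k}$ in two ways --- once from $(W^{-1})_{\gamma_n}=-W^{-1}W_{\gamma_n}W^{-1}$ together with the already-proved (\ref{prop1}), giving $\partial^nW^{-1}-W^{-1}B_n-\alpha_nW^{-1}\sum_jq_j\partial^{-1}r_j$, and once by differentiating $\sum_ih_i\partial^{-1}r_i$ termwise. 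Forming the combination $\alpha_n(\cdot)_{\sigma_k}-\beta_k(\cdot)_{\gamma_n}$ (which also cancels the $\dot F_i$ contributions from $h_{i,\gamma_n}$ and $h_{i,\sigma_k}$) and comparing the strictly negative parts, using $(\partial^nW^{-1}-W^{-1}B_n)_-=\sum_i\partial^n(h_i)\partial^{-1}r_i-\sum_ih_i\partial^{-1}B_n^{*}(r_i)$, yields $\sum_ih_i\partial^{-1}\bigl[\alpha_n(r_{i,\sigma_k}+B_k^{*}(r_i))-\beta_k(r_{i,\gamma_n}+B_n^{*}(r_i))\bigr]=0$, whence the claim. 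You should replace your cofactor computation with this argument; without it, the proof of (\ref{gkp3c}) is incomplete.
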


To prove Theorem \ref{thm}, we need several lemmas. The first one is
given by  Oevel and Strampp\cite{os96}:
\begin{lemm}
  \label{lm:OS}
  $W^{-1}=\sum_{i=1}^N h_i\partial^{-1} r_i.$
\end{lemm}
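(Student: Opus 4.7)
The plan is to verify $W R = 1$ with $R := \sum_{i=1}^N h_i \partial^{-1} r_i$, which characterizes $R$ as $W^{-1}$. The proof has two halves: show that $R$ has the correct principal symbol to be $W^{-1}$, and show that $WR$ is actually a differential operator; together these pin down $WR = 1$.

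The first half rests on recognizing $r_i$ as the $(i, N)$-entry of the inverse Wronskian matrix. Let $M_{ij} = h_j^{(i-1)}$; since the cofactor sign $(-1)^{N+i}$ equals the sign $(-1)^{N-i}$ appearing in the definition of $r_i$, we have $(M^{-1})_{i,N} = r_i$, and $M M^{-1} = I$ yields the bilinear identities $\sum_j h_j^{(i-1)} r_j = \delta_{i,N}$ for $i = 1, \ldots, N$. Writing $T_{i,l} = \sum_k h_k^{(i)} r_k^{(l)}$, a single differentiation of $T_{i,l}$ gives the Leibniz recursion $T_{i,l+1} = \partial_x T_{i,l} - T_{i+1,l}$. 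Starting from the base case $T_{i,0} = \delta_{i,N-1}$ for $i = 0, \ldots, N-1$, a brief induction yields $T_{i,l} = (-1)^l \delta_{i,N-1-l}$ whenever $i + l \le N - 1$; specializing to $i = 0$ gives $S_l := T_{0,l} = 0$ for $l \le N - 2$ and $S_{N-1} = (-1)^{N-1}$. Plugging into the expansion $\partial^{-1} f = \sum_{l \ge 0}(-1)^l f^{(l)}\partial^{-1-l}$,
\[
R = \sum_{l \ge 0}(-1)^l S_l \partial^{-1-l} = \partial^{-N} + O(\partial^{-N-1}),
\]
so $R$ has order $-N$ with leading symbol $1$.

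The second half uses the defining property $W(h_i) = 0$ of the Wronskian dressing operator. When composing $W$ with multiplication by $h_i$, the $\partial^0$-coefficient is exactly $W(h_i) = 0$, so $W h_i = A_i \partial$ for a differential operator $A_i$ of order $\le N - 1$. Hence $W \cdot h_i \partial^{-1} r_i = A_i \, \partial \, \partial^{-1} r_i = A_i r_i$ is an honest differential operator, and summing over $i$ shows $WR$ is differential. Combining with the first half, $WR$ is pseudo-differential of order $0$, has leading coefficient $1$, and has no negative powers of $\partial$; the only such operator is the identity, giving $WR = 1$ and hence $R = W^{-1}$.

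The step I expect to demand the most care is the second paragraph: translating the single cofactor identity $\sum_j h_j^{(i-1)} r_j = \delta_{i,N}$ into the full family $S_l = (-1)^{N-1}\delta_{l,N-1}$ via the Leibniz recursion, and verifying that the induction closes on the range $i + l \le N - 1$ that we actually need. Once that symbol computation is in place, the differential-operator reduction in the second half is essentially immediate from $h_i \in \ker W$.
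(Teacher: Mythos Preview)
The paper does not supply a proof of this lemma; it is quoted as a result of Oevel and Strampp \cite{os96} and used as a black box in the proof of Theorem~\ref{thm}. So there is no ``paper's own proof'' to compare against.

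Your direct verification is correct. The two halves fit together exactly as you describe: the cofactor identity $(M^{-1})_{i,N}=r_i$ combined with the Leibniz recursion on $T_{i,l}$ legitimately yields $S_l=(-1)^l\delta_{l,N-1}$ for $l\le N-1$, hence $R=\partial^{-N}+O(\partial^{-N-1})$; and the factorization $W\cdot h_i=A_i\partial$ (using that the constant term of $W\cdot h_i$ is $W(h_i)=0$) gives $WR=\sum_i A_i r_i$ as a genuine differential operator. Since $W$ is monic of order $N$, the product $WR$ is then differential of order~$\le 0$ with $\partial^0$-coefficient equal to $1$, forcing $WR=1$. One small point worth stating explicitly: from $WR=1$ you may conclude $R=W^{-1}$ because $W$, being monic, already possesses a two-sided inverse in the pseudo-differential algebra, so any one-sided inverse coincides with it.
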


\begin{lemm}
  \label{lm:key}\cite{liu2008}
  The operator $\partial^{-1} r_i W$ is a non-negative differential operator
  and
  \begin{equation}
    \label{eq:key-eq}
    (\partial^{-1}r_i W)(h_j)=\dt_{ij},~1\le i,j\le N.
  \end{equation}
\end{lemm}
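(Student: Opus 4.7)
The plan is to deduce both assertions from one key adjoint identity, $W^{*}(r_i)=0$ for $i=1,\ldots,N$, together with Lemma \ref{lm:OS}.

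First I would establish $W^{*}(r_i)=0$ by taking the formal adjoint of the identity $W^{-1}=\sum_{j=1}^N h_j\,\partial^{-1} r_j$ supplied by Lemma \ref{lm:OS}. Using $(\partial^{-1})^{*}=-\partial^{-1}$ and the fact that a multiplication operator is self-adjoint, this yields
$$(W^{*})^{-1}=-\sum_{j=1}^N r_j\,\partial^{-1} h_j,$$
which is again an Oevel--Strampp-type expansion, but now with the roles of the $h_j$ and $r_j$ interchanged. By the same logic that produces $W$ from $h_1,\ldots,h_N$ in (\ref{woperator}), one recognizes $W^{*}$ as the monic $N$-th order differential operator annihilating $r_1,\ldots,r_N$, and in particular $W^{*}(r_i)=0$. (A direct verification via the classical Lagrange identity for Wronskians, applied to the explicit formula (\ref{eqn:EigenFns}) for $r_i$, is also available.)

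Granted $W^{*}(r_i)=0$, the constant coefficient of $W^{*} r_i$ viewed as a differential operator in $\partial$ equals $(W^{*} r_i)(1)=W^{*}(r_i)=0$, so $W^{*} r_i=D_i\,\partial$ for some differential operator $D_i$ of order $N-1$. Taking adjoints and using $\partial^{*}=-\partial$ gives $r_i W=-\partial\,D_i^{*}$, whence
$$\partial^{-1} r_i W=-D_i^{*}$$
is a non-negative differential operator of order $N-1$, which is the first assertion.

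For the second assertion, set $P_i:=\partial^{-1}r_i W$. Since $\partial P_i=r_i W$ and $W(h_j)=0$ by the very construction of $W$ in (\ref{woperator}), we have $\partial\bigl(P_i(h_j)\bigr)=r_i\cdot W(h_j)=0$, so $P_i(h_j)=c_{ij}$ is a constant. Multiplying the identity of Lemma \ref{lm:OS} on the right by $W$ gives $\sum_j h_j P_j=W^{-1}W=1$; evaluated at $h_k$ this reads $h_k=\sum_j c_{jk}h_j$, and the linear independence of $h_1,\ldots,h_N$ (equivalent to the non-vanishing of $\mathrm{Wr}(h_1,\ldots,h_N)$) forces $c_{jk}=\delta_{jk}$, completing the proof. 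The main obstacle is the first step, showing $W^{*}(r_i)=0$; the cleanest route is the adjoint of Lemma \ref{lm:OS} combined with uniqueness of the monic $N$-th order operator with prescribed $N$-dimensional kernel, though a hands-on Lagrange-identity computation would also work.
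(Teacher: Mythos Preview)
The paper does not prove this lemma; it simply quotes it from \cite{liu2008}. Your proposal therefore supplies what the paper omits, and the argument is correct in outline and in detail.

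The skeleton is sound: once $W^{*}(r_i)=0$ is known, the factorisation $W^{*}r_i=D_i\,\partial$ and its adjoint $r_iW=-\partial D_i^{*}$ give $\partial^{-1}r_iW=-D_i^{*}$, a differential operator of order $N-1$. Then $\partial\bigl(P_i(h_j)\bigr)=r_i\,W(h_j)=0$ shows each $P_i(h_j)$ is a constant $c_{ij}$, and Lemma~\ref{lm:OS} in the form $\sum_j h_jP_j=W^{-1}W=1$, evaluated at $h_k$, together with linear independence of the $h_j$, forces $c_{ij}=\delta_{ij}$. All of this is clean.

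The one place that deserves an extra line is the claim $W^{*}(r_i)=0$. The phrase ``by the same logic that produces $W$'' is a little quick: the adjoint identity $(W^{*})^{-1}=-\sum_j r_j\,\partial^{-1}h_j$ does not by itself display $W^{*}$ as the operator (\ref{woperator}) built from the $r_j$ (for one thing, $W^{*}$ has leading coefficient $(-1)^N$). Nonetheless the conclusion does follow from that identity: left-multiply by $W^{*}$ and take the strictly negative part to obtain $\sum_j W^{*}(r_j)\,\partial^{-1}h_j=0$; expanding each $\partial^{-1}h_j$ as $\sum_{m\ge 0}(-1)^m h_j^{(m)}\partial^{-1-m}$ and reading off coefficients gives $\sum_j W^{*}(r_j)\,h_j^{(m)}=0$ for all $m\ge 0$, and the non-vanishing of $\mathrm{Wr}(h_1,\dots,h_N)$ then forces $W^{*}(r_j)=0$. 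The Lagrange--identity route you mention as a backup is the standard textbook proof and is equally valid.
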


{\sl Proof of Theorem \ref{thm}.}
   For (\ref{prop1a}), taking $\partial_{\gamma_n}$ to the identity
  $W(h_i)=0$, using \eqref{eq:fg}, \eqref{eq:h'}, the definition
  (\ref{eqn:EigenFns}) and Lemma \ref{lm:key}, we find
  \begin{align*}
    0=&(W_{\gamma_n})(h_i)+(W\partial^n)(f_i)+\alpha_n\dot F_i
    W(g_i)+F_i(W\partial^n)(g_i)\\
    =&(W_{\gamma_n})(h_i)+(W\partial^n)(h_i)-\alpha_nq_i\\
    =&(W_{\gamma_n})(h_i)+(L^nW)(h_i)-\alpha_n\sum_{j=1}^Nq_j\dt_{ji}\\
    =&(W_{\gamma_n}+L^n_-W-\alpha_n\sum_{j=1}^Nq_j\partial^{-1}r_jW)(h_i).
  \end{align*}
  Since the non-negative difference operator acting on $h_i$ in the last
  expression has degree $<N$, it can not annihilate $N$ independent functions
  unless the operator itself vanishes. Hence(\ref{prop1a}) is
  proved. Similarly, we can prove (\ref{prop1b}). Then Proposition \ref{eqn:add} leads to (\ref{gkp3a}) and (\ref{gkp3b}).

  The first equation in (\ref{gkp3c}) is easy to be verified by a direct calculation, so it remains
  to prove the second equation in (\ref{gkp3c}). Firstly, we see that
  $$(W^{-1})_{\gamma_n}=-W^{-1}W_{\gamma_n}W^{-1}=-W^{-1}(L^n_+-L^n+\alpha_n\sum_{j=1}^Nq_j\partial^{-1}r_j)$$
  \begin{equation}
  \label{wtau1}
  =\partial^nW^{-1}-W^{-1}B_n-\alpha_nW^{-1}\sum_{j=1}^Nq_j\partial^{-1}r_j.
  \end{equation}
\begin{equation}
  \label{wsigam1}
  (W^{-1})_{\sigma_k}=\partial^kW^{-1}-W^{-1}B_k-\beta_kW^{-1}\sum_{j=1}^Nq_j\partial^{-1}r_j.
  \end{equation}
On the other hand, from  $W^{-1}=\sum
  h_i\partial^{-1}r_i$  we have
\begin{equation}
  \label{wtau2}
    (W^{-1})_{\gamma_n}=\sum \partial^n(h_i)\partial^{-1}r_i+\sum h_i\partial^{-1}r_{i,\gamma_n}
  \end{equation}
\begin{equation}
  \label{wsigam2}
    (W^{-1})_{\sigma_k}=\sum \partial^k(h_i)\partial^{-1}r_i+\sum h_i\partial^{-1}r_{i,\sigma_k}
  \end{equation}
It is obviously that
$\alpha_n(\ref{wsigam1})-\beta_k(\ref{wtau1})=\alpha_n(\ref{wsigam2})-\beta_k(\ref{wtau2})$,
i.e.
$$-\beta_k\sum \partial^n(h_i)\partial^{-1}r_i-\beta_k\sum h_i\partial^{-1}r_{i,\gamma_n}+\alpha_n\sum \partial^k(h_i)\partial^{-1}r_i
+\alpha_n\sum
h_i\partial^{-1}r_{i,\sigma_k}$$$$=-\beta_k(\partial^nW^{-1}-W^{-1}B_n)_{-}+\alpha_n(\partial^kW^{-1}-W^{-1}B_k)_{-}$$
$$=-\beta_k\sum \partial^n(h_i)\partial^{-1}r_i+\beta_k\sum h_i\partial^{-1}B_n^{*}(r_i)+
\alpha_n\sum \partial^k(h_i)\partial^{-1}r_i-\alpha_n\sum
h_i\partial^{-1}B_k^{*}(r_i)$$ The above equations gives
$$\beta_k\sum h_i\partial^{-1}(r_{i,\gamma_n}+B_n^{*}(r_i))-\alpha_n\sum h_i\partial^{-1}(r_{i,\sigma_k}+B_k^{*}(r_i))=0,$$
which implies the second equation in (\ref{gkp3c}) holds.

\section{N-soliton solutions for $(\gamma_n,\sigma_k)$-KPH}
Using Theorem \ref{thm}, we can find N-soliton solutions to every
equations in the $(\gamma_n,\sigma_k)$-KPH (\ref{eqns:gkp3}) and
(\ref{eqns:gkp4}). Let us illustrate it by solving (\ref{examequ}).
We take the solution of (\ref{eq:fg}) as follows
\begin{displaymath}
 f_i:=\exp(\ld_i x+\ld_i^{2} y+\ld_i^{3} t)=e^{\xi_i}, \quad
 g_i:=\exp(\mu_ix+\mu_i^{2}y+\mu_i^{3}t)=e^{\et_i}
\end{displaymath}
\begin{equation}\label{eq:36}
  h_i:=f_i+F_i(\alpha_2y+\beta_3t)g_i=2\sqrt{F_i}\exp(\frac{\xi_i+\et_i}{2})\cosh(\Om_i),
  ~\Om_i=\frac12(\xi_i-\et_i-\ln F_i).
\end{equation}

For example, when $N=1$, $W=\partial-\frac{h'}{h}$,
$$L=W\partial^{-1}W=\partial+\frac{(\lambda_1-\mu_1)^{2}}{4}\sech^{2}\Omega_{1}\partial^{-1}+\cdots$$
The one-soliton solution for (\ref{examequ}) with $N=1$ as follows
$$u=\frac{(\lambda_1-\mu_1)^{2}}{4}\sech^{2}\Omega_{1}$$
$$q_1=\sqrt{\alpha_2F_{1y}+\beta_3F_{1t}}(\lambda_1-\mu_1)e^{\xi_1+\eta_1}\sech\Omega_{1}$$
$$r_1=\frac{1}{\sqrt{F_1}}e^{-(\xi_1+\eta_1)}\sech\Omega_{1}$$
In the case of $N=2$, the two-soliton solution for (\ref{examequ})
is given
$$u=\partial^{2}\ln \Theta,$$
$$q_1=(\alpha_2F_{1y}+\beta_3F_{1t})\frac{(\lambda_1-\mu_1)(\lambda_2-\mu_1)}{\Theta}(1+F_2\frac{(\lambda_1-\mu_2)(\mu_2-\mu_1)}
{(\lambda_1-\lambda_2)(\lambda_2-\mu_1)}e^{\eta_2-\xi_1})e^{\eta_1},$$
$$q_2=(\alpha_2F_{2y}+\beta_3F_{2t})\frac{(\lambda_2-\mu_2)(\lambda_1-\mu_2)}{\Theta}(1+F_1\frac{(\lambda_2-\mu_1)(\mu_1-\mu_2)}
{(\lambda_2-\lambda_1)(\lambda_1-\mu_2)}e^{\eta_1-\xi_2})e^{\eta_2},$$
$$r_1=\frac{1+F_2e^{\eta_2-\xi_2}}{\lambda_2-\mu_1}e^{-\xi_1},~r_2=\frac{1+F_1e^{\eta_1-\xi_1}}{\lambda_2-\mu_1}e^{-\xi_2}$$
where
$$\Theta=1+F_1\frac{\lambda_2-\mu_1}{\lambda_2-\lambda_1}e^{\eta_1-\xi_1}+F_2\frac{\mu_2-\lambda_1}{\lambda_2-\lambda_1}e^{\eta_2-\xi_2}
+F_1F_2\frac{\mu_2-\mu_1}{\lambda_2-\lambda_1}e^{\eta_1+\eta_2-\xi_1-\xi_2}.$$

\section{Conclusion}
In contrast to the multi-component generalization of KP hierarchy,
we generalize KP hierarchy by introducing new time series $\gamma_n$
and $\sigma_k$ and adding eigenfunctions as components. The
$(\gamma_n,\sigma_k)-$KPH includes KP hierarchy and extended KP
hierarchy, and contains first type and second type as well as mixed
type of KP equation with self-consistent sources as special cases.
The constrained flows of $(\gamma_n,\sigma_k)-$KPH can be regarded
as the generalized Gelfand-Dickey hierarchy. We develop the dressing
method for solving the $(\gamma_n,\sigma_k)-$KPH and present its
N-soliton solutions.

\section*{Acknowledgement}
This work is supported by National Basic Research Program of China
(973 Program) (2007CB814800),  National Natural Science Foundation
of China (10901090,10801083) and  Chinese Universities Scientific
Fund (2011JS041).


\begin{thebibliography}{10}



\bibitem{date} Date E,  Jimbo M ,Kashiwara M and  Miwa T , J. Phys.
Soc. Jpn. {\bf 50} (1981), 3806.

\bibitem{jimbo1}  Jimbo M and  Miwa  T, Publ. Res. Inst. Math. {\bf 19} (1983), 943.

\bibitem{Sato}  Sato M and  Sato Y, in: Nonlinear Partial Differential
Equations in Applied Science, Tokyo, 1982, North-Holland, Amsterdam,
1983, pp. 259-271.

\bibitem{Date1982} Date  E,  Jimbo M,  Kashiwara M and  Miwa T, Publ.
Res. Inst. Math. Sci. {\bf 18} (1982), 1077.

\bibitem{Kac}  Kac V G and  Van de Leur J W, J. Math. Phys.{\bf 44} (2003),
3245.

\bibitem{Leur}  Van de Leur J W,  J. Math. Phys. {\bf 39}(1998), 2833 .

\bibitem{Aratyn}  Nissimov E and Pacheva S, Phys. Lett. A  {\bf 244} (1998),
245.

\bibitem{dickey}  Dickey L A, Soliton Equations and Hamiltonian
Systems, second ed., World Scientific Publishing, River Edge, NJ,
2003.

\bibitem{melni1}
 Mel'nikov  V~K,
 Lett. Math. Phys. {\bf 7} (1983), 129.

\bibitem{melni2}
 Mel'nikov V ~K,  Commun. Math. Phys. {\bf 112}(1987), 639 .

\bibitem{melni3}  Mel'nikov V K, Phys. Lett. A  {\bf 128} (1988), 488.

\bibitem{zeng04}
 Xiao T and  Y. B. Zeng,  J. Phys. A  {\bf 37} (2004), 7143.

\bibitem{Hu1}  Hu X B and  Wang H Y, Inverse Problem  {\bf 22}
(2006), 1903.

\bibitem{Hu2}  Hu X B and Wang H Y, Inverse Problem  {\bf 23} (2007), 1433.

\bibitem{liu2008}
 Liu X J, Zeng Y B and  Lin  R L, Phys. Lett. A  {\bf 372}
(2008), 3819.

\bibitem{lin2008}  Lin R L,  Liu X J and  Zeng Y B, J. Nonlinear
Math. Phys. {\bf 15} (2008), 333.

\bibitem{yao2009} Yao Y Q,  Liu X J and  Zeng Y B, J. Phys. A:
Math. Theor.{\bf 42}(2009), 454026.

\bibitem{liu2009}
 Liu X J,  Lin R L, Jin  B and  Zeng Y B, J. Math. Phys. {\bf
50} (2009), 053506.

\bibitem{mawx}
 Ma W X, Commun. Nonlinear Sci. Num. Simulat. {\bf 16} (2011), 722.

\bibitem{constrant1} Konopelchenko B,  Sidorenko J and Strampp W, Phys. Lett. A {\bf 157}  (1991), 17.

\bibitem{constrant2} Cheng Y, J. Math. Phys.  {\bf 33} (1992), 3774.


\bibitem{kdv1}
 Mel'nikov V K, Phys. Lett. A {\bf 128}(1988), 488 .

\bibitem{kdv2}
Lin R L, Zeng Y B and  Ma W X, Physica A {\bf 291} (2001), 287.

\bibitem{os96}
Oevel W and Strampp  W, J. Math. Phys. {\bf 37} (1996), 6213 .






\end{thebibliography}
\end{document}